\theoremstyle{definition}
\begin{document}
\title{Classifying sequences by combining context-free grammars and OWL ontologies}
\titlerunning{Classifying sequences with CFG and OWL}
%
\author{Nicolas Lazzari\inst{1, 3}\orcidID{0000-0002-1601-7689} \and
Andrea Poltronieri\inst{2}\orcidID{0000-0003-3848-7574} \and
Valentina Presutti\inst{1}\thanks{Alphabetical order}\orcidID{0000-0002-9380-5160}}
\authorrunning{N. Lazzari et al.}
%
\institute{
LILEC, University of Bologna, Italy \\
\email{nicolas.lazzari2@studio.unibo.it, valentina.presutti@unibo.it} \and
Department of Computer Science and Engineering, University of Bologna, Italy \\ 
\email{andrea.poltronieri2@unibo.it}  \and
Department of Computer Science, University of Pisa, Italy
}
\maketitle              
\begin{abstract}
This paper describes a pattern to formalise context-free grammars in OWL and its use  for sequence classification. The proposed approach is compared to existing methods in terms of computational complexity as well as pragmatic applicability, with examples in the music domain. 
\keywords{sequence classification \and context-free grammar \and ontologies \and music}
\end{abstract}

\section{Introduction} \label{sec:introduction}
The introduction of formal grammars by Chomsky in the 50s \cite{chomsky1956cfg}, and in particular Context Free Grammars (CFG), led to prolific research in the area of Natural Language Processing. Methods based on statistical language modeling have mostly replaced formal grammars, nevertheless research in this area is still relevant as many domains and tasks benefit from their application. For example, an important application of formal grammars concerns high level programming languages. Through an efficient parsing process \cite{aho1974lr}, machine-level instructions can be abstracted in human readable instructions. Theoretically, every problem that can be abstracted as a \emph{sequence of symbols} can be modeled with formal grammars, which  makes them a suitable tool for \emph{sequence classification}, the task we focus on in this paper. In the biology field, the classification of RNA secondary structures has been performed using CFG \cite{damasevicius2010dna, rivas2000rna, giegerich2014rna}. Similarly, in the music field, CFG are used to classify different types of harmonic and melodic sequences \cite{rohrmeier2011cfg, tidhar2005grammar, steedman1984jazz, keller2013improviser, baroni1984melody, tojo2006hpsg}.

\textbf{Background.} A \emph{language} is a collection of sequences, each defined according to a finite set of symbols \cite{chomsky1959cfgproperties}. A \emph{grammar} can be interpreted as a function of a language, having a set of symbols as its domain, and a set of sequences as its range. 
Defining a formal grammar is a complex task that requires deep knowledge of the application domain as well as good modeling skills, to obtain an efficiently parsable grammar. CFGs can be parsed in less than $O(n^3)$ \cite{valiant1975cfgcompl}, with $n$ the length of the string. However when symbols have ambiguous semantics and require additional attributes to be disambiguated, parsing a sequence is NP-complete \cite{barton1987nlpcompl}.
Ambiguous symbols are a common issue, for instance in the case of polysemous words in natural language or diminished chords in music. This problem can be mitigated through the use of complex notations, such as SMILES~\cite{weininger1988smiles} to represent molecules in the biology field or Harte~\cite{harte2005symbolic} to represent musical chords. Nevertheless, these notations are either hard to interpret, requiring additional tools to be converted back into a human understandable format, or they cause loss of information.

We address the problem of sequence classification by proposing a hybrid approach that combines the use of Context Free Grammar (CFG) parsers with OWL ontologies. We define a pattern to formalise CFG by providing a novel definition for CFG based on Description Logic. We define a set of algorithms to produce an OWL ontology based on this pattern that supports the alignment of symbols in a sequence to its classes. Our approach is based on the identification of sub-sequences according to the taxonomy defined in the ontology. We argue that our proposal has a relevant pragmatic potential as it enables sequence classification based on semantic web knowledge representation, therefore supporting the linking of Context Free Grammars to web ontologies and knowledge graphs.

The contribution of this research can be summarised as follows:
\begin{itemize}
    \item defining a novel formalisation of Context Free Grammars based on Description Logic;
    \item providing an algorithm for the conversion of such formalisation in OWL;
    \item demonstrating the correctness, computational complexity and applicability of the proposed method in the music domain.
\end{itemize}

The paper is organized as follows: in Section \ref{sec:related}, an overview of related works on sequence classification is presented. Section \ref{sec:preliminaries} provides relevant definitions for Context Free Grammars that are used later in Section \ref{sec:method}, to describe the formalisation of CFGs in DL. Section \ref{sec:method:cfgdl-in-owl} describes our approach to sequence classification. In section \ref{sec:experiments} we evaluate our method on the task of sequence classification in the music domain. Finally, in Section \ref{sec:conclusions}, we summarize the contribution and discuss future development.


\section{Related Work}  \label{sec:related}
Relevant work to this contribution include: techniques for sequence classification, sequence model with grammars, approaches to integrate CFG and semantic web technologies, and their application in the music domain (cf. Section \ref{sec:experiments}).
\\

\noindent \textbf{Sequence Classification} (SQ)
is the task of predicting the class of an input, defined as a sequence over time or space, among a predefined set of classes \cite{madhusudhan2019indianmusic}.
SQ is relevant in several application fields, such as genomics research \cite{sanjeev2021gene}, health informatics, abnormality detection and information retrieval \cite{xing2010survey}, Natural Language Processing (NLP) \cite{lei2019nlp, chen2016sentiment}. 
In \cite{xing2010survey} different methodologies for SQ are identified, such as feature-based classification, sequence distance-based classification and support vector machines (SVM).
The most advanced approaches mainly rely on Deep Learning (DL) \cite{lobosco2017dl, carrasco2019dl}. 
SQ is relevant in the music domain, as sequences are at its core, for instance melodic and harmonic sequences that span over a temporal dimension.
An example of sequence classification applied to music 
is \cite{madhusudhan2019indianmusic}, which addresses the recognition of \textit{raga}
using recurrent neural networks (LSTM-RNN).\\

\noindent\textbf{Sequence Model with Grammars.}
There is a close relationship between sequences and formal grammars. A classic related task (ranging from natural language processing \cite{lawrence2000nlp} to bio-informatics \cite{damasevicius2010dna}) is \textit{grammar inference}~\cite{young-lai2009inference}. 
Grammars are used for the classification of sequences of different types, mainly for analysing genetic sequences \cite{rivas2000rna,giegerich2014rna}.
There are applications of grammars for  music classification. The most renowned example is the generative theory of tonal music (GTTM) \cite{lerdahl1983gttm}, analogous to Chomsky's transformational or generative grammar. \cite{chomsky1956cfg}. 
Although GTTM does not explicitly provide generative grammar rules, this work has inspired the formalisation of a wide range of context-free rules, describing different music genres \cite{tidhar2005grammar, steedman1984jazz, keller2013improviser}, melody \cite{baroni1984melody} and harmony \cite{rohrmeier2011cfg, tojo2006hpsg}.
These works are relevant input to our work as they formalise aspects of certain types of sequences into rules. 
\\

\noindent \textbf{Sequence Model with OWL.} There are proposals to use OWL to classify sequences. 
However, one of the main challenges when dealing with sequences in OWL is to organise the elements being described in an ordered fashion, as proposed in \cite{drummond2006order}. For instance, OWL reasoning is employed for classifying genomic data \cite{wolstencroft2007genomic}.
A method for analysing jazz chord sequences is proposed in \cite{pachet2000solarblues}. This system is based on an ontology which, through reasoning, produces a hierarchical jazz sequence analysis. 
Similarly, two OWL ontologies, $\mathcal{MEO}$ and $\mathcal{SEQ}$, are presented in \cite{wissmann2012chord} that combined with a CFG parser support sequence classification. Nevertheless, this method is only able to represent \textit{safely-concatenable} CFG, while we overcome this limitation in our approach (cf. Section \ref{sec:comparison}).

\section{Preliminaries}\label{sec:preliminaries}
This section introduces the notation and the definitions used in Section \ref{sec:method}, based on \cite{hopcroft2001automata} that the reader can consult for details.

\begin{definition}[Context Free Grammar]
\label{def:cfg}
A Context Free Grammar (CFG) $G = (V, \Sigma, R, S)$ consists of a finite set of non-terminal $V$ (variables), a set of terminals $\Sigma$ such that $\Sigma \cap V = \emptyset$, a set of functions $R \subseteq V \times (V \times R)^*$ (production), and a starting symbol $S \in V$.
\end{definition}

\begin{definition}[Language of a grammar]
\label{def:l-cfg}
The language of a grammar $G(V, \Sigma, R, S)$ is defined as $L(G) = \{ w \in \Sigma^*: S \xRightarrow{*} w \} $, where $S \xRightarrow{*} w$ represents the consecutive application of production $f \in R$ starting from the initial symbol $S$, called derivation.
\end{definition}

\begin{example}[Context Free Grammar]
\label{ex:cfg}
Let $G = (V, \Sigma, R, S)$ with
\begin{alignat*}{3}
    &V = \{ &&\text{Expression}, \text{Bit} \} \\
    &\Sigma = \{ &&0, 1, + \} \\
    &R = \{ &&\text{Expression} \rightarrow \text{Expression} + \text{Expression}\ |\ \text{Bit}\ 0\ |\ \text{Bit}\ 1\ |\  0\ |\  1, \\
    & &&\text{Bit} \rightarrow \text{Bit}\ 0\ |\ \text{Bit}\ 1\ |\ 0\ |\ 1\ \} \\
    &S = &&\text{Expression}
\end{alignat*}
where $X \rightarrow X_1 | \cdots | X_n$ is a shorthand for $\{ X \rightarrow X_1, \ \cdots\ X \rightarrow X_n \}$
\end{example}

Example \ref{ex:cfg} shows a simple grammar used to parse the sum of two binary numbers. Its language $L$, as defined in definition \ref{def:l-cfg}, is of the form $L = \{ 0+0, 0+1, 1+0, 10+0, \cdots, 11010+10, \cdots \}$.

In order to express a concise and effective conversion method and its corresponding proof we only consider grammars in Chomsky Normal Form, as defined in Definition \ref{def:cnf}. This results in homogeneous productions in the form \begin{align*}
    A \rightarrow B C \quad \text{or} \quad A \rightarrow t
\end{align*}

where $A, B, C \in V$ and $t \in \Sigma$.

\begin{definition}[Chomsky Normal Form]
\label{def:cnf}
A Context Free Grammar is in Chomsky Normal Form (CNF) if the set of functions $R \subseteq V \times (V \backslash \{S\} \times R)^2$
\end{definition}

Note that the imposed restriction does not imply any loss in expressiveness, since any context-free grammar can be converted in CNF \cite{hopcroft2001automata}. For instance, Example \ref{ex:cfg}, converted in CNF, results in the grammar in Example \ref{ex:cfg-cnf}.

\begin{example}[Example \ref{ex:cfg} in Chomsky Normal Form]
\label{ex:cfg-cnf}
Let $G = (V, \Sigma, R, S)$ with
\begin{alignat*}{3}
    &V = \{ &&\text{Expression}, \text{Expression}_0, \text{Bit}, \text{Zero}, \text{One}, \text{Plus} \} \\
    &\Sigma = \{ &&0, 1, + \} \\
    &R = \{ &&\text{Expression} \rightarrow \text{Expression}_0\ \text{Expression}\ |\ \text{Bit}\ \text{Zero}\ |\ \text{Bit}\ \text{One}\ |\ 0\ |\ 1, \\
    & &&\text{Expression}_0 \rightarrow \text{Expression}\ \text{Plus}, \\
    & &&\text{Bit} \rightarrow \text{Bit}\ \text{Zero}\ |\ \text{Bit}\ \text{One}\ |\ 0\ |\ 1, \\
    & &&\text{Plus} \rightarrow +, \\
    & &&\text{Zero} \rightarrow 0, \\
    & &&\text{One} \rightarrow 1 \}
\end{alignat*}
\end{example}

When parsing a language based on a grammar, it is useful to visualise the derivation process as a parse tree. 

\begin{definition}[Parse tree of a CFG]
\label{def:cfg-parse-tree}
A parse tree $T$ of $G = (V, \Sigma, R, S)$ is a tree in which each leaf $l \in (V \cup \Sigma)$ and each inner node $n_i \in V$.
Given $c_1 \cdots c_n$ the children of an inner node $n_i$ then $\exists f \in R$ s.t. $f: n_i \rightarrow c_1 \cdots c_n$.\cite{hopcroft2001automata}    
\end{definition}

\begin{corollary}
\label{def:cnf-binary-parse-tree}
Given $T$ the parse tree of a CFG in CNF $\Rightarrow$ $T$ is a binary tree.
\end{corollary}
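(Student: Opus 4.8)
The plan is to prove the statement directly from the definitions, via a short case analysis on the productions that can label an inner node. First I would fix an arbitrary inner node $n_i$ of the parse tree $T$ and, by Definition \ref{def:cfg-parse-tree}, record that its children $c_1 \cdots c_n$ are exactly the right-hand side of some production $f: n_i \rightarrow c_1 \cdots c_n$ in $R$. Thus the out-degree of $n_i$ coincides with the length of the right-hand side of the production that generated it.

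The key step is then to bound this length using the CNF restriction. By Definition \ref{def:cnf}, every production of $G$ has either the shape $A \rightarrow BC$ with $B, C \in V \setminus \{S\}$, or the shape $A \rightarrow t$ with $t \in \Sigma$. In the first case $n_i$ has exactly two (non-terminal) children; in the second case it has exactly one (terminal, hence leaf) child. Since these two forms exhaust the productions available in CNF, we conclude that $n \leq 2$ for every inner node $n_i$.

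Finally, since $n_i$ was arbitrary and leaves have no children by Definition \ref{def:cfg-parse-tree}, every node of $T$ has at most two children, which is precisely the defining property of a binary tree; hence $T$ is binary. The argument requires essentially no computation, and I expect no real obstacle, as the result is an immediate corollary of the normal-form constraint on right-hand-side length. The only points deserving care are to make explicit the notion of \emph{binary tree} in play (at most two children per node) and to confirm that the two CNF production forms are genuinely exhaustive, so that no inner node can have out-degree three or greater.
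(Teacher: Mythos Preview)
Your proposal is correct and follows exactly the same route as the paper, which simply notes that the corollary follows from Definitions~\ref{def:cnf} and~\ref{def:cfg-parse-tree} because every CNF production is either unary or binary. You have merely made the case analysis explicit, which the paper leaves implicit.
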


Corollary \ref{def:cnf-binary-parse-tree} follows from Definitions \ref{def:cnf} and \ref{def:cfg-parse-tree}, since each production is either a unary or a binary function. Figure \ref{fig:ex-parse-tree-cfg} shows  the parse tree of the sequence \textit{1+0} from the grammar defined by Example \ref{ex:cfg} and Figure \ref{fig:ex-parse-tree-cnf} shows the parse tree the grammar defined by Example \ref{ex:cfg-cnf}.

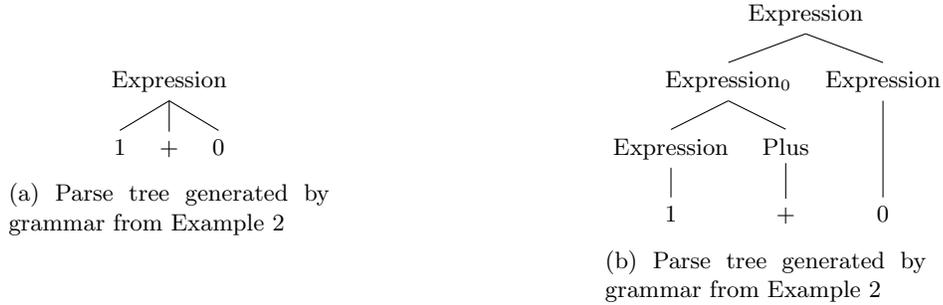
\begin{figure}[ht]
    \centering
    \begin{subfigure}{0.35\textwidth}
        \centering
        \begin{forest}
        for tree={parent anchor=south, child anchor=north}
        [Expression [1] [+] [0] ]
        \end{forest}
        \caption{Parse tree generated by grammar from Example \ref{ex:cfg-cnf}}
        \label{fig:ex-parse-tree-cfg}
    \end{subfigure}
    \hfill
    \begin{subfigure}{0.35\textwidth}
        \centering
        \begin{forest}
        for tree={parent anchor=south, child anchor=north}
        [Expression
            [Expression$_0$ [Expression [1, tier=word]] [Plus [+, tier=word]]]
            [Expression [0, tier=word]]]
        \end{forest}
        \caption{Parse tree generated by grammar from Example \ref{ex:cfg-cnf}}
        \label{fig:ex-parse-tree-cnf}
    \end{subfigure}
    \caption{Parse trees obtained from the sequence \textit{1+0}}
    \label{fig:ex-parse-tree}
\end{figure}

\section{Formalising Context-Free Grammars using Description Logic}\label{sec:method}

As OWL is based on Description Logic (DL) theory, we define a DL-based formalisation of Context-Free Grammars (in Chomsky Normal Form), which we refer to as CFG-DL.
CFG-DL is based on Definition \ref{def:cfg}, where variables and terminals are represented as concepts\footnote{DL concepts translate into OWL classes}.
We demonstrate that any CFG can be converted in a CFG-DL (cf. Theorem \ref{theorem:cfg_in_cfgdl}) and that such conversion can be performed in $O(n)$ (cf. Theorem \ref{theorem:cfg_in_cfgdl_linear}).
Theorem \ref{theorem:cfg_in_cfgdl} and its proof rely on the concept of \emph{rolification}, which formalises axioms that act as rules in the form \emph{if-then} \cite{krisnadhi2011owlandrules}. 
For each concept $C$ a corresponding axiom $R_C$ is created and the restriction $C \equiv R_C . Self$ is imposed. By chaining together different axioms it is possible to define \emph{if-then} rules. 
For a more in-depth explanation, please refer to Krisnadhi et al. \cite{krisnadhi2011owlandrules}.

\begin{definition}[CFG-DL]
\label{def:cfgdl}
A CFG-DL $G_{DL} = (C_v, R, C_\Sigma, S)$ consists of a finite set of concepts $C_v$, a finite set of concepts $C_\Sigma$, a set of axioms $R$, and a starting concept $S \in C_v$.
\end{definition}

\begin{theorem}
\label{theorem:cfg_in_cfgdl}
Every Context-Free Grammar $G$ in Chomsky Normal Form can be converted in a CFG-DL $G_{DL}$.
\end{theorem}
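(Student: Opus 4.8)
The plan is to give a constructive proof: I will exhibit an explicit conversion $\phi$ that maps each of the four components of $G = (V, \Sigma, R, S)$ to the corresponding component of $G_{DL} = (C_v, R, C_\Sigma, S)$, and then verify that the output meets Definition \ref{def:cfgdl}. The symbol sets translate directly: I introduce one concept $C_A$ for every non-terminal $A \in V$, collecting them into $C_v$, and one concept $C_t$ for every terminal $t \in \Sigma$, collecting them into $C_\Sigma$; since Definition \ref{def:cfg} requires $\Sigma \cap V = \emptyset$, these two families of concepts are guaranteed to be disjoint. The starting symbol $S$ is mapped to its concept, which serves as the starting concept of $G_{DL}$.

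The core of the construction is the translation of the production set $R$ into DL axioms, and here I would exploit the fact that $G$ is in Chomsky Normal Form (Definition \ref{def:cnf}), so that only two shapes of production need to be handled. For a terminal production $A \rightarrow t$, I would emit a simple axiom relating the terminal concept $C_t$ to the non-terminal concept $C_A$, encoding that an occurrence of $t$ in the sequence is an instance of $A$. For a binary production $A \rightarrow BC$, I would use \emph{rolification}: equipping each concept $C$ with its self-property via $C \equiv \exists R_C.\mathit{Self}$ and introducing an adjacency role between consecutive elements of the sequence, I would add a property-chain axiom that fires exactly when an instance of $B$ is immediately followed by an instance of $C$, propagating the classification $A$ to the corresponding span. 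By Corollary \ref{def:cnf-binary-parse-tree} every parse tree of such a grammar is binary, so these two axiom schemas jointly cover all productions.

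Since $R$ is finite and each production contributes a fixed, finite number of axioms and rolified roles, the resulting axiom set is finite, as are $C_v$ and $C_\Sigma$; hence $\phi(G)$ is a well-formed CFG-DL in the sense of Definition \ref{def:cfgdl}, which already discharges the literal statement. The substance of the result, however, is that this conversion is \emph{faithful} rather than merely type-correct, and that is what the rest of the paper relies on.

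The hard part will therefore be pinning down the rolification encoding of the binary productions so that chained axioms reconstruct the hierarchical grouping of a parse tree rather than merely the flat adjacency of symbols: I must ensure that once a sub-sequence has been recognised as an instance of some concept, that concept can in turn participate in the adjacency relation at the next level up, exactly mirroring the binary branching of Corollary \ref{def:cnf-binary-parse-tree}. I would discharge this obligation by induction on the height of the parse tree, showing that every sub-sequence derivable from a non-terminal $A$ is classified under $C_A$; the base case handles terminal productions $A \rightarrow t$ and the inductive step handles $A \rightarrow BC$ via the property chain. Getting the interaction between the self-loops $R_C$ and the adjacency role right — so that the rules neither fail to fire on genuine sub-derivations nor spuriously fire across sub-sequence boundaries — is the delicate point of the argument.
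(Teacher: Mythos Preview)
Your proposal is correct and follows essentially the same construction as the paper: concepts for symbols, a subsumption axiom $C_t \sqsubseteq C_A$ for each terminal rule $A \to t$, and rolification plus a property chain over an adjacency role $R_{\text{next}}$ for each binary rule $A \to BC$. The one concrete detail the paper adds that you leave implicit is that it emits \emph{two} chains per binary rule, $R_B \circ R_{\text{next}} \circ R_C \sqsubseteq R_1$ and $R_C \circ R_{\text{next}}^{-1} \circ R_B \sqsubseteq R_2$, together with $(C_B \sqcap V_1) \sqcup (C_C \sqcap V_2) \sqsubseteq C_A$, so that \emph{both} endpoints of the span are classified under $C_A$; this is what lets the newly inferred $C_A$ acquire its own self-loop and participate in higher chains, which is exactly the mechanism you flag as the ``hard part.'' Conversely, your proposed induction on parse-tree height and your worry about spurious firings across sub-sequence boundaries go beyond what the paper actually argues: the paper's proof is purely constructive and simply asserts $G_{DL} \equiv G$ at the end without the soundness/completeness argument you sketch.
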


\begin{proof}
\label{proof:cfg_in_cfgdl}
Given a Context-Free grammar $G = (V, \Sigma, R, S)$ in Chomsky Normal Form we can obtain the corresponding $G_{DL} = (C'_v, R', C'_\Sigma, S')$ as follows:
\begin{enumerate}
    \item \label{proof:cfg_in_cfgdl:1}
        $\forall v \in V$ let $C_v$ be a concept such that $C_v \sqsubseteq C'_v$ \\
        $\xRightarrow \quad \forall v \in V \  \exists C_v \sqsubseteq C'_v$, where $C_v$ is the respective concept of the variable $V$.

    \vspace{0.5em}

    \item \label{proof:cfg_in_cfgdl:2}
        $\forall t \in \Sigma$ let $C_t$ be a concept such that $C_t \sqsubseteq C'_\Sigma$ \\
        $\xRightarrow \quad \forall t \in \Sigma \  \exists C_t \sqsubseteq C'_\Sigma$, where $C_t$ is the respective concept of the terminal $t$.

    \item \label{proof:cfg_in_cfgdl:3}
          Let $f \in R$. It follows from Definition \ref{def:cnf} that $f$ is of either type:
          \begin{enumerate}
              \item $R \rightarrow A B$ such that $R \in V$, $A, B \in V \cup \Sigma$. 
              \item $R \rightarrow t$ such that $R \in V$ and $t \in \Sigma$;
          \end{enumerate}
          Both cases can respectively be represented in DL as follows:
          \begin{enumerate}
              \item \label{proof:cfg_in_cfgdl:2:1}
                    \begin{enumerate}
                        \item \label{proof:cfg_in_cfgdl:2:1:1}
                              Let $C_R \sqsubseteq C'_v, C_A \sqsubseteq C'_v, C_B \sqsubseteq C'_v$ be the respective concepts of $R, A, B$ definied in step \ref{proof:cfg_in_cfgdl:1}
                         \item \label{proof:cfg_in_cfgdl:2:1:2}
                                Let $R_R, R_A, R_B$ be the rolification \cite{krisnadhi2011owlandrules} of the concepts $C_R, C_A, C_B$ such that $C_R \equiv \exists R_R.Self$, $C_A \equiv \exists R_A.Self$, and $C_B \equiv \exists R_B.Self$.
                         \item \label{proof:cfg_in_cfgdl:2:1:3}
                                Let $R_{next}$ be the role such that $C_1 \circ R_{next} \circ C_2$ has semantic meaning \textit{$C_1$ has as next element in the sequence $C_2$}, with $C_1 \sqsubseteq C'$ and $C_2 \sqsubseteq C'$
                         \item \label{proof:cfg_in_cfgdl:2:1:4}
                                Let $V_1 \equiv \exists R_1.Self$ and $V_2 \equiv \exists R_2.Self$ be roles such that $R_A \circ R_{next} \circ R_B \sqsubseteq R_1$ and $R_B \circ R_{next}^{-1} \circ R_A \sqsubseteq R_2$. \\
                    \end{enumerate}
                    $\xRightarrow{\ref{proof:cfg_in_cfgdl:2:1:1}, \ref{proof:cfg_in_cfgdl:2:1:2}, \ref{proof:cfg_in_cfgdl:2:1:3}, \ref{proof:cfg_in_cfgdl:2:1:4}} \quad R \rightarrow A B \Longleftrightarrow (C_A \sqcap V_1) \sqcup (C_B \sqcap V_2) \sqsubseteq C_R$.
              \item \label{proof:cfg_in_cfgdl:2:2}
                    Let $C_t \sqsubseteq C'_\Sigma$ be the concept of the terminal $t$ defined in step \ref{proof:cfg_in_cfgdl:2} and $C_R \sqsubseteq C'_V$ be the concept of variable $R$ defined in step \ref{proof:cfg_in_cfgdl:1} \\
                    $\Rightarrow R \rightarrow t \Longleftrightarrow C_t \sqsubseteq C_r $
          \end{enumerate}
          $\xRightarrow{\ref{proof:cfg_in_cfgdl:2:1}, \ref{proof:cfg_in_cfgdl:2:2}} \quad f \in R', \ \forall f \in R$ .

    \item \label{proof:cfg_in_cfgdl:4}
          $\xRightarrow{\ref{proof:cfg_in_cfgdl:1}} \exists C_s \sqsubseteq C'_V$ where $C_s$ is the concept corresponding to $S$, as defined in step \ref{proof:cfg_in_cfgdl:1}.
\end{enumerate}

$\xRightarrow{\ref{proof:cfg_in_cfgdl:1}, \ref{proof:cfg_in_cfgdl:2}, \ref{proof:cfg_in_cfgdl:3}, \ref{proof:cfg_in_cfgdl:4}} G_{DL} \equiv G$.
\end{proof}

\begin{theorem}
\label{theorem:cfg_in_cfgdl_linear}
The conversion between a CFG $G = (V, \Sigma, R, S)$ and a CFG-DL $G_{DL} = (C'_v, R', C'_\Sigma, S')$ can be performed in $O(n)$, in particular O($|V| + |\Sigma| + |R|$).
\end{theorem}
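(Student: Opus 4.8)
The plan is to show that the conversion procedure described in the proof of Theorem~\ref{theorem:cfg_in_cfgdl} touches each element of the grammar $G = (V, \Sigma, R, S)$ a constant number of times, so that the total work is proportional to the combined size of the input, namely $O(|V| + |\Sigma| + |R|)$. The strategy is simply to account for the cost of each of the four numbered steps of that construction separately and then sum them. Throughout I would adopt the standard uniform-cost assumption that creating a single concept, a single rolification axiom, or a single subsumption axiom takes $O(1)$ time; under this assumption the argument reduces to counting how many such objects each step produces.

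First I would bound Steps~\ref{proof:cfg_in_cfgdl:1} and \ref{proof:cfg_in_cfgdl:2}. Step~\ref{proof:cfg_in_cfgdl:1} iterates once over each variable $v \in V$, creating one concept $C_v$ and one subsumption axiom $C_v \sqsubseteq C'_v$ per variable, hence $O(|V|)$ operations. Symmetrically, Step~\ref{proof:cfg_in_cfgdl:2} iterates once over each terminal $t \in \Sigma$, producing one concept and one axiom each, giving $O(|\Sigma|)$. Next I would analyse Step~\ref{proof:cfg_in_cfgdl:3}, which is the dominant and most delicate part of the count: it iterates once over each production $f \in R$. The key observation is that for each individual production the construction emits only a \emph{fixed, bounded} number of objects. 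A binary production $R \rightarrow A B$ reuses the already-created concepts $C_R, C_A, C_B$ (from Step~\ref{proof:cfg_in_cfgdl:1}) and adds a constant number of new items: the three rolifications $R_R, R_A, R_B$, the shared role $R_{next}$, the two roles $V_1, V_2$ together with their role-chain inclusions, and one final subsumption axiom. A unary production $R \rightarrow t$ adds a single subsumption axiom. In both cases the per-production cost is $O(1)$, so summing over all productions yields $O(|R|)$. Step~\ref{proof:cfg_in_cfgdl:4} merely designates the already-constructed concept $C_s$ as the start symbol, costing $O(1)$.

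Adding the contributions gives $O(|V|) + O(|\Sigma|) + O(|R|) + O(1) = O(|V| + |\Sigma| + |R|)$, which is linear in the size $n$ of the input grammar, establishing the claim. The main obstacle I anticipate is not in the summation but in justifying the uniform $O(1)$ cost per production and per concept. One must argue that $R_{next}$ is created only once and thereafter reused across all productions (otherwise a naive reading might suggest it is recreated each time), and that looking up the pre-existing concepts $C_R, C_A, C_B$ for a production is a constant-time operation rather than a linear search through $V$ --- this presupposes an appropriate data structure, such as a hash map from symbols to their concepts, built during Steps~\ref{proof:cfg_in_cfgdl:1} and \ref{proof:cfg_in_cfgdl:2}. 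I would make these bookkeeping assumptions explicit so that the linearity is not silently undermined by hidden lookup costs.
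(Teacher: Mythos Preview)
Your proposal is correct and follows essentially the same approach as the paper: the paper's proof simply observes that the construction of Theorem~\ref{theorem:cfg_in_cfgdl} loops through each element of $V$, $\Sigma$, and $R$ at most once. Your version is considerably more detailed---explicitly accounting for the constant work per production, the single creation of $R_{next}$, and the need for constant-time symbol lookup---but the underlying argument is the same.
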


\begin{proof}
It follows from the proof of Theorem \ref{theorem:cfg_in_cfgdl} as we only need to loop through each element of $V$, $\Sigma$ and $R$ at most one time.
\end{proof}


We remark that in Definition \ref{def:cfgdl} terminals are modeled as concepts and stand at the same level of variables. At first, it might seem more intuitive to represent terminals as individuals. But this would radically change the semantic meaning of an element in a sequence. 
Take for example the sequence \textit{10+11} from the language of  grammar in Example \ref{ex:cfg}. There are three occurrences of terminal \textit{1}, but they are fundamentally different entities: the first occurrence of terminal \textit{1} is characterized by its syntactic aspect as well as its position with respect to the whole sequence. 
If we represent each terminal as an individual then each occurrence of that terminal in a sequence would be represented by the very same individual. This would invalidate the semantics of the whole sequence and yield a wrong formalization.
In order to address this issue, we need a proper definition of how to represent a sequence in description logic. We do that by adapting Definition \ref{def:l-cfg} to CFG-DL.

\begin{definition}[Language of a CFG-DL]
\label{def:l-cfgdl}
Let $G = (C_v, R, C_\Sigma, S)$ be a CFG-DL, we define as $L(G)$ the set of sequences $\overline{s}$ such that, given $N$ the number of elements in the sequence $\overline{s}$, $\overline{s} \equiv (C_1 \sqcap \exists R_{\text{next}}.C_2) \sqcap \cdots \sqcap (C_{N - 1} \sqcap \exists R_{\text{next}}.C_N)$, with $R_{\text{next}}$ the role defined in step \ref{proof:cfg_in_cfgdl:2:1:3} of Theorem \ref{theorem:cfg_in_cfgdl}'s proof and $C_t \sqsubseteq C_\Sigma, t \in [1, N]$.
\end{definition}

\section{Sequence classification using CFG-DL}
\label{sec:method:cfgdl-in-owl}
CFG-DL can be represented in OWL, as OWL2 direct semantics is based on Description Logic \cite{horrocks2012owl}. We devise an algorithm based on Definition \ref{def:cfgdl} and on the respective constructive proof \ref{proof:cfg_in_cfgdl} of Theorem \ref{theorem:cfg_in_cfgdl}. 
Algorithm \ref{algo:cfg-in-owl} converts a CFG to OWL, without generating any intermediary CFG-DL. A similar algorithm can be defined to convert a CFG in CFG-DL, following the constructive proof \ref{proof:cfg_in_cfgdl}. 

Triples are written in Manchester syntax \cite{horridge2006manchester}. We use the symbol $\blacktriangleright$ to indicate the OWL triples that need to be created. 
We generally use $\overline{R}_{next}$ as the role $R_{next}$ defined in step \ref{proof:cfg_in_cfgdl:2:1:3} of Thereom \ref{theorem:cfg_in_cfgdl}'s proof. Any arbitrary OWL property can be used as $\overline{R}_{next}$ as long as it is a functional property, such as the \textit{seq:directlyPrecedes} property from the \textit{sequence} Ontology Design Pattern \cite{hitzler2016odp}. 
Algorithm \ref{algo:cfg-in-owl} has also complexity $O(n)$: similarly to the considerations on Theorem \ref{theorem:cfg_in_cfgdl_linear}, we only need to loop through each element of $V$, $\Sigma$ and $R$ at most one time.

\begin{algorithm}[htp]
    \caption{CFG in OWL}
    \label{algo:cfg-in-owl}
    \begin{algorithmic}
        \Require $G = (V, \Sigma, R, S)$
        \State $\blacktriangleright$ \texttt{ObjectProperty: $\overline{R}_1$}   
        \State $\blacktriangleright$ \texttt{ObjectProperty: $\overline{R}_2$}   
        \State $\blacktriangleright$ \texttt{Class: $\overline{V}_1$ EquivalentTo: $\overline{R}_1$ some}   
        \State $\blacktriangleright$ \texttt{Class: $\overline{V}_2$ EquivalentTo: $\overline{R}_2$ some}
        
        \For{$\overline{v} \in V$}
            \State {$\blacktriangleright$ \texttt{ObjectProperty: $R_{\overline{v}}$}}
            \State {$\blacktriangleright$ \texttt{Class: $C_{\overline{v}}$ EquivalentTo: $R_{\overline{v}}$ some Self }}
        \EndFor

        \For{$\overline{t} \in \Sigma$}
            \State {$\blacktriangleright$ \texttt{ObjectProperty: $R_{\overline{t}}$}}
            \State {$\blacktriangleright$ \texttt{Class: $C_{\overline{t}}$ EquivalentTo: $R_{\overline{t}}$ some Self }}
        \EndFor

        \For {$r \in R$}
            \If{$r$ is of type $R \rightarrow A B$}
                \State \textit{with $\overline{C}_R$, $\overline{C}_A$, $\overline{C}_B$ being the respective concepts of $R, A, B$}
                \State \textit{with $\overline{R}_A$, $\overline{R}_B$ being the respective rolification of $A, B$}
                \State \texttt{$\blacktriangleright$ ObjectProperty: $\overline{R_1}$ SubPropertyChain: $\overline{R}_A$ o $\overline{R}_{next}$ o $\overline{R}_B$}
                \State \texttt{$\blacktriangleright$ ObjectProperty: $\overline{R}_2$ SubPropertyChain: $\overline{R}_B$ o inverse($\overline{R}_{next}$) o $\overline{R}_A$}
                \State $\blacktriangleright$ \texttt{($\overline{C}_A$ and $\overline{V}_1$) or ($\overline{C}_B$ and $\overline{V}_2$) SubClassOf: $\overline{C}_R$ }
            \ElsIf{$r$ is of type $R \rightarrow t$}
                \State \textit{with $\overline{C}_R$, $\overline{C}_t$ being the respective concepts of $R, t$}
                \State \texttt{$\blacktriangleright$ Class: $\overline{C}_t$ SubClassOf: $\overline{C}_R$}
            \EndIf
        \EndFor
    \end{algorithmic} 
\end{algorithm}

The rolification of the classes \textit{$\overline{V}_1$} and \textit{$\overline{V}_2$} is performed by using the existential restriction on \texttt{owl:Thing}. 
This prevents the creation of non-simple properties due to the use of property chain later in the algorithm and allows the usage of reasoners such as Hermit \cite{glimm2014hermit} or Pellet \cite{sirin2007pellet}.

Sequences must be converted to be used in the ontology obtained with Algorithm \ref{algo:cfg-in-owl}. Algorithm \ref{algo:seq-in-owl-for-cfgdl} presents an algorithm that performs such conversion in $O(n)$. 
It is based on Definition \ref{def:l-cfgdl}. Analogously to Algorithm \ref{algo:cfg-in-owl}, we express triples in Manchester syntax using the symbol $\blacktriangleright$ and we use $\overline{R}_{next}$ as the role $R_{next}$ defined in step \ref{proof:cfg_in_cfgdl:2:1:3} of Thereom \ref{theorem:cfg_in_cfgdl}'s proof.

\begin{algorithm}[htp]
    \caption{Sequence in OWL for CFG-DL}
    \label{algo:seq-in-owl-for-cfgdl}
    \begin{algorithmic}
        \Require $G = (V, \Sigma, R, S)$
        \Require $s \subseteq \Sigma^*$ the sequence to represent
        \Require $N$ the length of the sequence $s$

        \For{$i \in [1, N - 1]$}
            \State $s_i \gets s[i]$
            \State $s_n \gets s[i + 1]$
            \State \textit{with $\overline{C}_i$, $\overline{C}_n$ being the respective concepts of the terminals $s_i, s_n$}
            \State \texttt{$\blacktriangleright$ Individual: $s_n$ Types: $\overline{C}_n$}
            \State \texttt{$\blacktriangleright$ Individual: $s_i$ Types: $\overline{C}_i$ Facts: $\overline{R}_{next}\ s_n$ }
        \EndFor
    \end{algorithmic} 
\end{algorithm}

\begin{figure}[htp]
    \centering
    \includegraphics[width=0.85\textwidth]{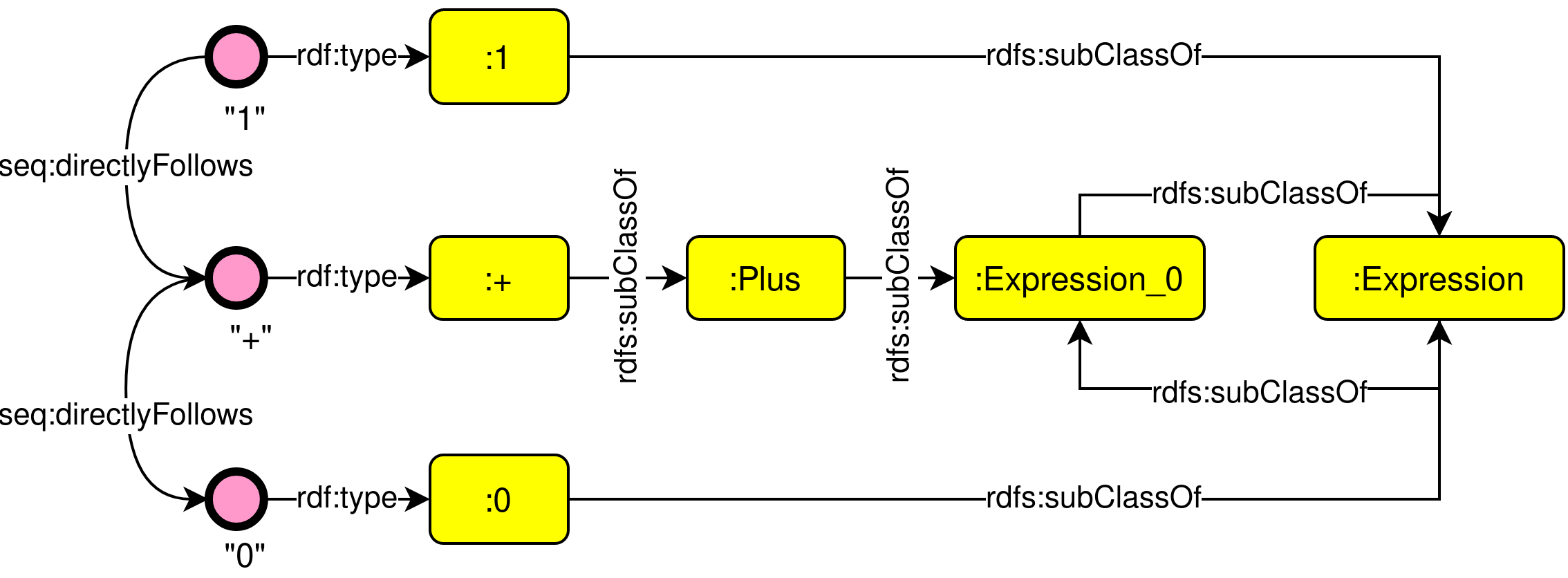}
    \caption{Sequence \textit{1+0} parsed by the grammar from example \ref{ex:cfg-cnf} represented as CFG-DL in OWL}
    \label{fig:example-cfgdl-owl}
\end{figure}

Figure \ref{fig:example-cfgdl-owl}, shows the grammar from Example \ref{ex:cfg-cnf}, converted to a CFG-DL in OWL with Algorithm \ref{algo:cfg-in-owl}, used to parse the sequence \textit{1 + 0}, converted using Algorithm \ref{algo:seq-in-owl-for-cfgdl}.
We can see how the whole sequence is correctly classified to be of class \texttt{Expression} and how \texttt{Expression\_0} and \texttt{Expression} are classified as  subclass of each other. Indeed, \texttt{Expression\_0} and \texttt{Expression} are equivalent. This can be observed from the normalization process performed on Example \ref{ex:cfg} that resulted in Example \ref{ex:cfg-cnf}: \textit{Expression\_0} variable is introduced to obtain a binary projection of $\textit{Expression} \rightarrow \textit{Expression} + \textit{Expression}$, as required by CNF.
If we substitute every occurrence of \textit{Expression\_0} with its right hand side (\textit{Expression} + \textit{Expression}) an equivalent grammar, which is not in CNF, is obtained.
The overall pattern in Figure \ref{fig:general-pattern} can be generalized to every CFG converted in OWL.

\begin{figure}[htp]
    \centering
    \begin{subfigure}{0.45\textwidth}
        \centering
        \includegraphics[width=\textwidth]{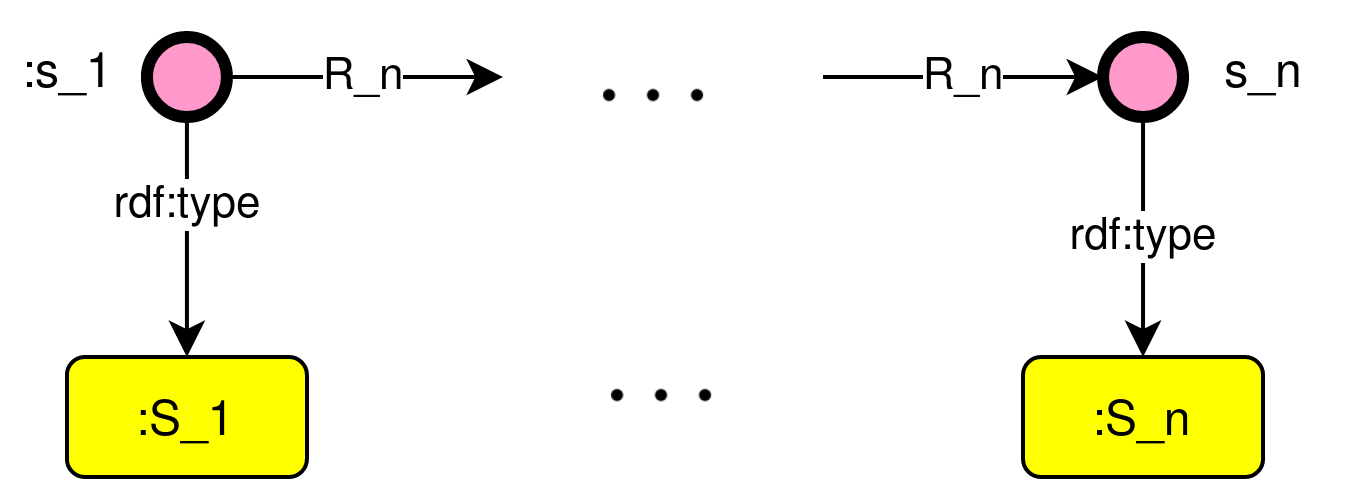}
        \caption{}
        \label{fig:pattern-seq-in-owl-for-cfgdl}
    \end{subfigure}
    \hfill
    \begin{subfigure}{0.54\textwidth}
        \centering
        \includegraphics[width=\textwidth]{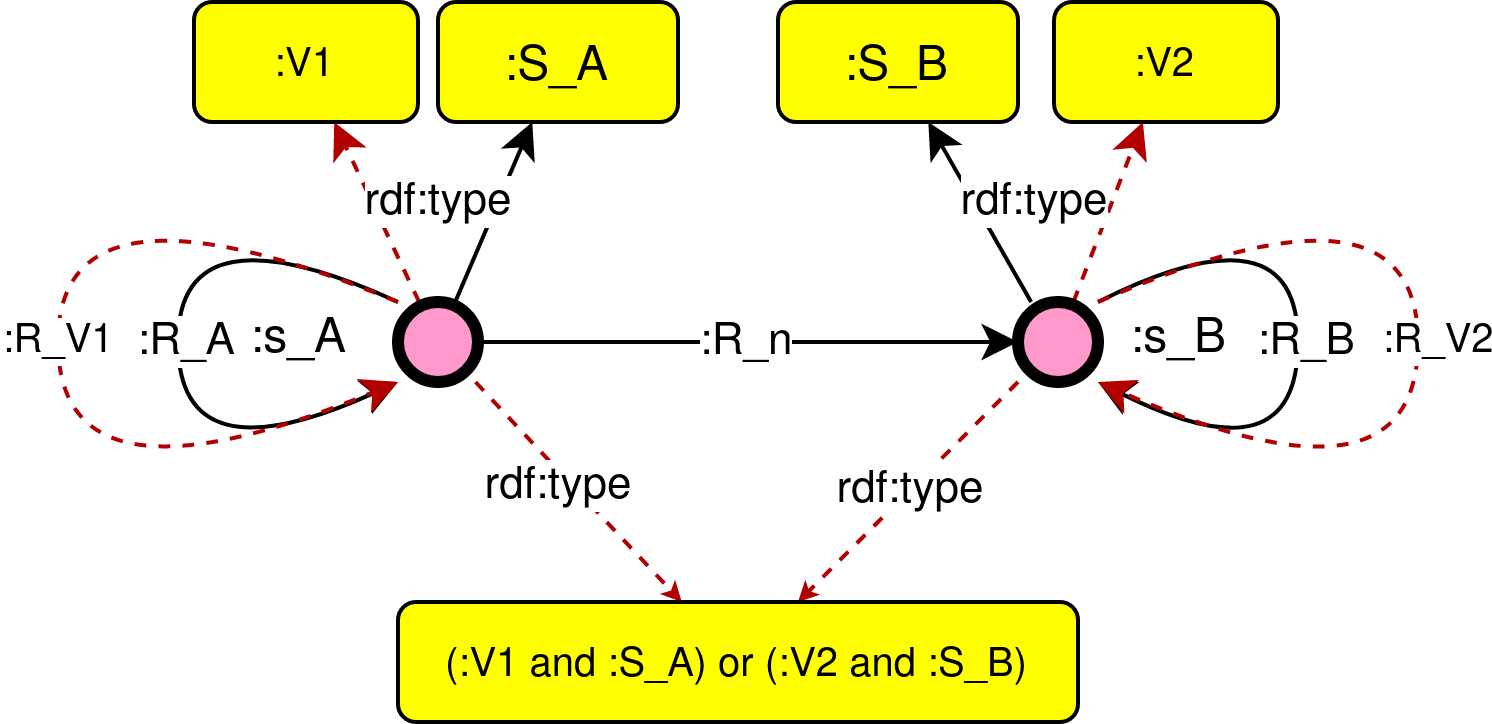}
        \caption{}
        \label{fig:cfgdl-reasoning}
    \end{subfigure}
    \caption{General ontology patterns from Algorithms \ref{algo:cfg-in-owl} (\ref{fig:pattern-seq-in-owl-for-cfgdl}) and \ref{algo:seq-in-owl-for-cfgdl} (\ref{fig:cfgdl-reasoning}), for a rule of type $R \rightarrow A\ B$. The red arrows are created by the reasoner, due to the definition of the property chain and the general axiom. The pattern from a rule of type $R \rightarrow t$ is a simple subsumption relation. }
    \label{fig:general-pattern}
\end{figure}

\subsection{Computational complexity} \label{sec:method:complexity}
An ontology produced by Algorithm \ref{algo:cfg-in-owl} is in DL $\mathcal{SROIEL}$, which is contained in OWL 2 DL \cite{krisnadhi2011owlandrules}. It has exponential complexity (\texttt{NExpTime}) for automated reasoning \cite{horrocks2003complexity}. 
\emph{Parsing} a sequence through a DL reasoner would be too complex compared to an external parser: parsing CFGs has complexity $O(n^3)$ \cite{valiant1975cfgcompl} even in the case of ambiguous grammars \cite{earley1970}. Without the use of inverse properties the produced ontology is in DL $\mathcal{SROEL}$, which is within OWL-EL and solvable in polynomial time \cite{horrocks2003complexity}. 
To mitigate the overall complexity we propose a hybrid approach combining CFG parsers and OWL reasoning, to perform sequence classification.

\subsection{Combining CFG parser with OWL-based reasoning} \label{sec:reason-cfg-owl}
We claim that converting CFG in CFG-DL, besides being an interesting theoretical approach, constitutes a relevant pragmatic approach to perform automatic sequence classification that can benefit from an explicit knowledge representation, using OWL ontologies. 
In practice, given a CFG $G$, after recognising a sequence $s \in L(G)$ using a parser for $G$, the resulting parse tree can be converted, using Algorithm \ref{algo:parse-tree-in-owl} to instantiate the OWL ontology $O$ resulting from Algorithm \ref{algo:cfg-in-owl}, for $G$. 

\begin{algorithm}[ht]
    \caption{Parse tree in OWL}
    \label{algo:parse-tree-in-owl}
    \begin{algorithmic}
        \Require $G = (V, \Sigma, R, S)$ in CNF
        \Require $s \subseteq \Sigma^*$ with $s \in L(G)$
        \Require $T$ the parse tree obtained by parsing $s$ with $G$
        \Ensure $T$ is a binary tree

        \ForAll{leaf $l$ in $R$}
            \State \textit{with $\overline{C}_l$ being the concept of the terminal $l$}
            \ForAll{ancestor $a$ of $l$}
                \State \textit{with $\overline{C}_a$ being the concept of the variable $a$}
                \State \texttt{$\blacktriangleright$ Class: $C_l$ SubClassOf: $\overline{C}_a$}
            \EndFor
        \EndFor
    \end{algorithmic} 
\end{algorithm}

Algorithm \ref{algo:parse-tree-in-owl} is based on Definitions \ref{def:cfg-parse-tree} and \ref{def:cnf-binary-parse-tree}. 

The sequence $s$ can be now classified by a DL reasoner according to the classes in $O$ - or of any other ontology aligned to $O$. This process is demonstrated in Section \ref{sec:experiments} with a use case in the music domain.

The same approach can be used to convert the parse tree produced by algorithms such as Neural Network based Part of Speech tagging \cite{akbik2018pos,bohnet2018pos} or Constituency Parsing \cite{tian2020dparse,yang2020dparse,mrini2019dparse}.

\subsection{Comparison with $\mathcal{SEQ}$} \label{sec:comparison}
The work presented in \cite{wissmann2012chord} introduces $\mathcal{SEQ}$, an ontology pattern used to model sequence of elements using Description Logic and OWL. 
The method performs sequence classification by identifying sub-sequences through a subsumption relation: the sequence that is being classified subsumes a set of patterns (sub-sequences). Those patterns classify the sequence. 
The author shows how this method is only able to represent \textit{safely-concatenable} CFG. 
A CFG is \textit{safely-concatenable} if its productions are in the form $R \rightarrow t_1 \cdots t_n X$, with $X, R \in V$ and $t_1 \cdots t_n \in \Sigma$ \cite{wissmann2012chord}. $V$ and $\Sigma$ are defined as in Definition \ref{def:cfg}. 
Such restrictions prevent the representation of \textit{self-embedding} grammars \cite{wissmann2012chord}, which are grammars that contain productions of the type $R \rightarrow \alpha R \beta$, with $R \in V$ and $\alpha, \beta \in (V \cup \Sigma)$ \cite{chomsky1959cfgproperties}.
Our proposal overcomes this limitation by directly reflecting the semantics of a production, as shown in Proof \ref{proof:cfg_in_cfgdl} of Theorem \ref{theorem:cfg_in_cfgdl}.

\section{Experiments} \label{sec:experiments}
In this section we apply our approach to the music domain \footnote{The code of the experiments is available at \url{https://github.com/n28div/CFGOwl} under CC-BY License.} to perform the automatic analysis of harmonic progressions. 
Harmonic progressions are defined as sequences of chords, their analysis consists in assessing the underlying function of each chord \cite{pachet2000solarblues}. Traditionally, it is performed by trained musicians since a deep knowledge and understanding of the music domain is required. The correctness depends on the taxonomy used and on the context in which the sequence is analysed (e.g. the genre).

In music theory, harmony is a well-researched area, and several taxonomies have been proposed to perform this task \cite{rohrmeier2011cfg}. Most approaches classify each chord based on its tonal function, according to western musical theory, using CFG \cite{rohrmeier2011cfg, dehaas2013harmtrace} or Probabilistic CFG \cite{harasim2018generalized}.
The implementation of these grammars can be problematic and relies on different techniques, such as Haskell datatypes in \cite{dehaas2013harmtrace} or  extensions to the definition of CFG in \cite{harasim2018generalized}.
In \cite{keller2013improviser} a CFG is used to detect sub-sequences, called \textit{bricks}. \textit{Bricks} are classes of chords sequences. Their combination defines new \textit{bricks}. A similar approach is explored in \cite{pachet2000solarblues}, where the definition of \textit{bricks} (called \textit{idioms}) is performed through the use of a tree-like hierarchical ontology implemented using Object Oriented programming. 

Our experiments are based on a subset of the rules implemented by \cite{keller2013improviser}, which we convert into an OWL ontology using Algorithm \ref{algo:cfg-in-owl}. 

\subsection{Grammar subset} 
\label{sec:exp:grammar-subset}

The CFG defined in \cite{keller2013improviser} can be formalized as $G_k = (V, \Sigma, R, S)$ where the set of variables $V$ is the set of sub-sequences that will be extracted from a harmonic progression, $\Sigma$ is the set of chords, $R$ is the set of productions that maps each sequence to the corresponding set of chords. The starting $S$ can be assigned to a special variable $V_s \in V$ such that $\forall \  t \in \Sigma \  \exists f \in R\ :s.t.\ f(V_s) = t$.
To obtain a more tractable example, we extract a subset of the whole grammar $G_k$: we will only use the variables, terminals, and productions that are sufficient to analyze the tune \textit{Blue Bossa} by \textit{Dorham Kenny}. We then expand the grammar to include a few other productions that should not appear in the final analysis, to investigate how accurately the ontology reflects a grammar-based approach. A correspondence between the analysis of \cite{keller2013improviser} and our results provides empirical evidences of the method correctness.

\begin{figure}[htp]
    \centering
    \begin{alignat*}{2}
        \text{OnOffMinorIV\_Cm} \rightarrow&\ \text{MinorOn\_Cm}\ \text{Off\_F} \\
        \text{MinorOn\_Cm} \rightarrow&\ \text{C:min}\ |\ \text{C:minmaj7}\ |\ \text{C:min6}\ |\ \text{C:min7} \\ 
        \text{Off\_F} \rightarrow&\ \text{F:7}\ |\ \text{F}\ |\ \text{F:maj}\ |\ \text{F:min}\ |\ \text{F:min7}\ |\ \text{F:minmaj7}\ |\ \text{F:dim7} \\ 
        \text{SadCadence\_Cm} \rightarrow&\ \text{SadApproach\_Cm}\ \text{MinorOn\_Cm} \\
                              &|\ \text{SadApproach\_Cm}\ \text{MinorOn\_Cm} \\
                              &|\ \text{F:7(\#11)}\ \text{MinorPerfectCadence\_Cm} \\ 
        \text{SadApproach\_Cm} \rightarrow&\ \text{D:hdim7}\ \text{G:7} \\ 
        \text{MinorPerfectCadence\_Cm} \rightarrow&\ \text{G:7}\ \text{C:min7} \\ 
        \text{StraightCadence\_Db} \rightarrow&\ \text{StraightApproach\_Db}\ \text{Db} \\ 
                                   &|\ \text{StraightApproach\_Db}\ \text{Db:maj7} \\ 
        \text{StraightApproach\_Db} \rightarrow&\ \text{Eb:min7}\ \text{StraightApproach\_C\_0} \\
                                    &|\ \text{Eb:min7}\ \text{Ab:7} \\
        \text{StraightApproach\_C\_0} \rightarrow&\ \text{Ab:7}\ \text{C:7/Bb}
    \end{alignat*}
    \caption{Productions of grammar $G_{k_1} \subseteq G_k$. Only productions are listed. The set of terminals and variables is the one used in the productions.}
    \label{fig:experiment-grammar-1-productions}
\end{figure}

Figure \ref{fig:experiment-grammar-1-productions} shows the formalization of the rules strictly needed to classify \textit{Blue Bossa} as performed in \cite{keller2013improviser}. 
Using algorithm \ref{algo:cfg-in-owl} we convert the grammar in figure \ref{fig:experiment-grammar-1-productions} into OWL. The resulting ontology contains $130$ axioms. 
Using algorithm \ref{algo:seq-in-owl-for-cfgdl} we convert the chord annotations of \textit{Blue Bossa},  taken from \cite{keller2013improviser} into OWL. The resulting ontology contains a total of $29$ axioms. 
By joining the two ontologies, we obtain a final ontology with a total of $159$ axioms. 
The ontology correctly parses the sequence, as can be seen from table \ref{tab:parsing-experiment-1}.

\setlength{\tabcolsep}{12pt}
\renewcommand{\arraystretch}{1.2}
\begin{table}[htp]
\centering
\caption{Parsing results for the harmonic progression of \textit{Blue Bossa} using the grammar of figure \ref{fig:experiment-grammar-1-productions}. The class identified by \cite{keller2013improviser} is represented in bold text}.

\begin{tabularx}{\textwidth}{ c X } 
\toprule
Chord & Inferred classes \\ \midrule
C:min7  & \textbf{OnOffMinorIV\_Cm} VariableOne C:min7 MinorOn\_Cm                                               \\ 
F:min7  & \textbf{OnOffMinorIV\_Cm} VariableTwo F:min7 Off\_F                                                    \\ 
D:hdim7  & VariableOne SadApproach\_Cm \textbf{SadCadence\_Cm} D:hdim7                                            \\ 
G:7  & MinorPerfectCadence\_Cm VariableTwo VariableOne G:7 SadApproach\_Cm \textbf{SadCadence\_Cm}            \\ 
C:minmaj7  & C:minmaj7 VariableTwo MinorOn\_Cm \textbf{SadCadence\_Cm}                                              \\ 
Eb:min7  & VariableOne StraightApproach\_Db \textbf{StraightCadence\_Db} Eb:min7                                  \\ 
Ab:7  & VariableTwo VariableOne StraightApproach\_Db Ab:7 \textbf{StraightCadence\_Db} StraightApproach\_C\_0  \\ 
Db:maj7  & VariableTwo Db:maj7 \textbf{StraightCadence\_Db}                                                       \\ 
D:hdim7  & VariableOne SadApproach\_Cm \textbf{SadCadence\_Cm} D:hdim7                                            \\ 
G:7  & MinorPerfectCadence\_Cm VariableTwo VariableOne G:7 SadApproach\_Cm \textbf{SadCadence\_Cm}            \\ 
C:minmaj7 & C:minmaj7 VariableTwo MinorOn\_Cm \textbf{SadCadence\_Cm} \\ \bottomrule
\end{tabularx}

\label{tab:parsing-experiment-1}
\end{table}

\subsection{Reasoning complexity} \label{sec:exp:reasoning-complexity}
As discussed in section \ref{sec:method:complexity}, the computational complexity of parsing a sequence using a CFG-DL is exponential. 
On figure \ref{fig:time-complexity} we parse the song \textit{Blue Bossa} using the grammar defined in Section \ref{sec:exp:grammar-subset}. We then progressively add random productions to the grammar that do not affect the classification. 
At each iteration we add $5$ new productions, which have a random number of right-hand sides sampled in the range $[1, 10]$. 
Each production is of type $R \rightarrow AB$ $80\%$ of the time and $R \rightarrow t$ $20\%$ of the time, to reflect the higher frequency of $R \rightarrow AB$ productions, especially when a CFG is expressed in CNF.

\begin{figure}[htp]
    \centering
    \begin{subfigure}{0.45\textwidth}
        \centering
        \includegraphics[width=\textwidth]{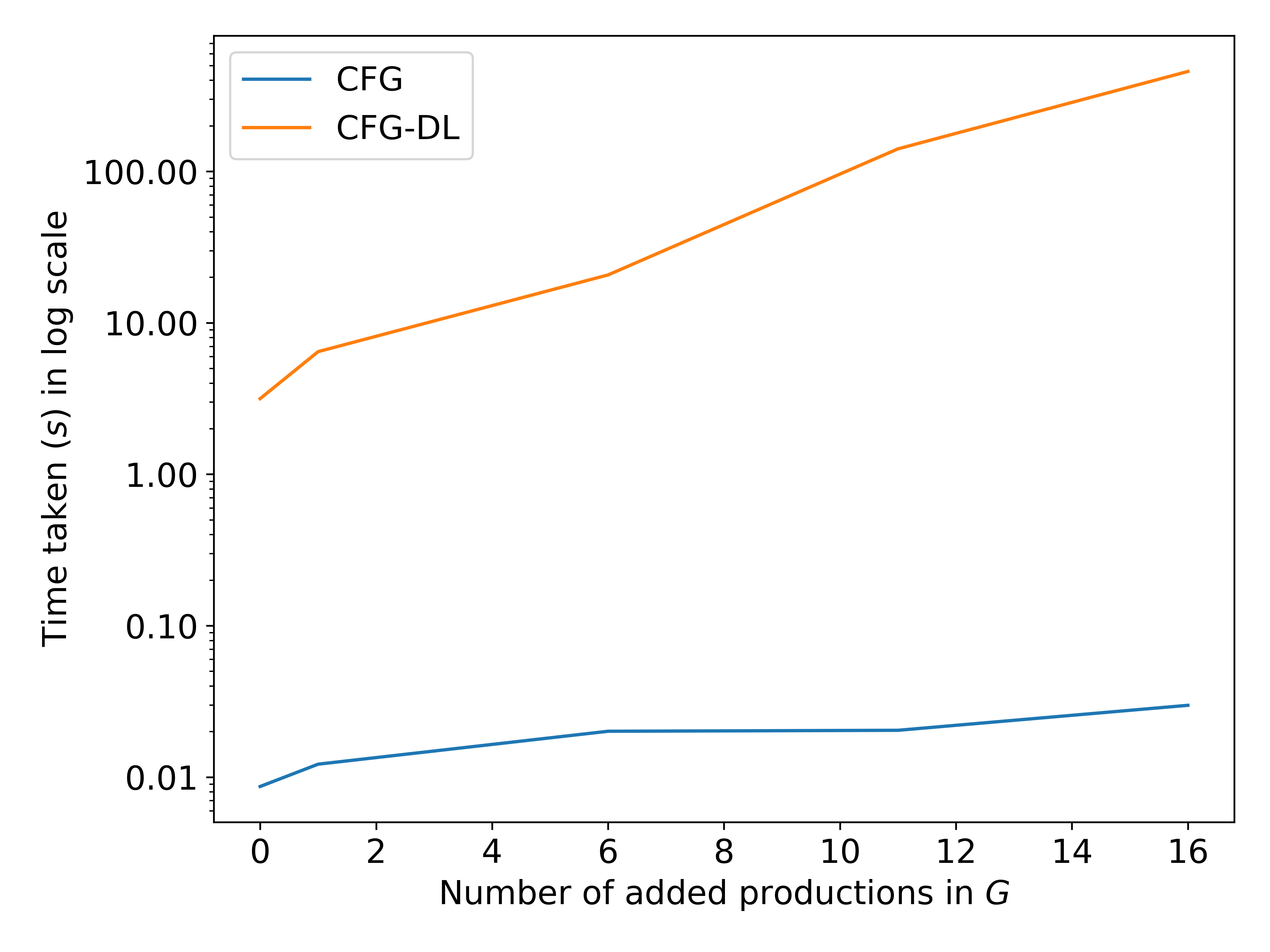}
        \caption{Time taken (y axis, logarithmic) as random productions are added to the grammar (x axis). Parsing using DL is compared to the use of Earley parser on the corresponding CFG. The time taken by using the hybrid approach is $200\%$ ($3$ orders of magnitude) than using DL parsing.}
        \label{fig:time-complexity}
    \end{subfigure}
    \hfill
    \begin{subfigure}{0.45\textwidth}
        \centering
        \includegraphics[width=\textwidth]{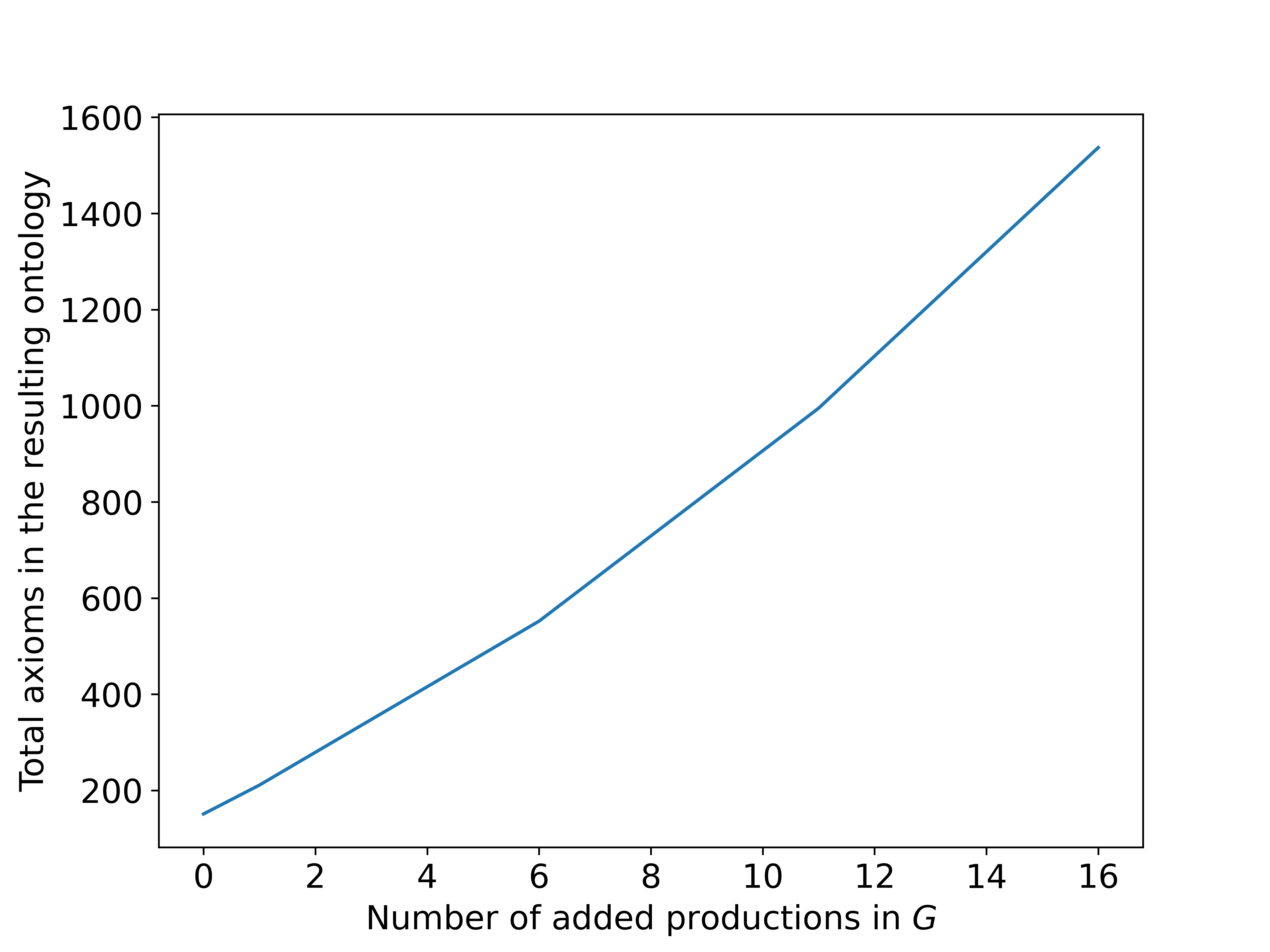}
        \caption{Number of total axioms in the ontology (y axis) as random productions are added to the grammar (x axis). As more productions are added to the grammar, say $N$, roughly $10N$ axioms are inserted in the ontology. Since the computational complexity directly depends on the number of axioms in the ontology, the resulting CFG-DL is inefficient in real-world settings.}
        \label{fig:total-axioms}
    \end{subfigure}
    \caption{Empirical results of the computational complexity when using a CFG-DL to parse the song \textit{Blue Bossa}.}
    \label{fig:empirical-results}
\end{figure}

In Figure \ref{fig:empirical-results} empirical results from the described experiments are shown. 
Figure \ref{fig:time-complexity} shows how as productions are added to the grammar, the time complexity of CFG-DL increases exponentially.
This is a consequence of the proportional increase of axioms as new productions are added (Figure \ref{fig:total-axioms}).
When using the hybrid approach of Section \ref{sec:reason-cfg-owl} the computational complexity is much lower. In Figure \ref{fig:time-complexity} the time required to classify a sequence is significantly lower. All the experiments are executed using the Pellet reasoner \cite{sirin2007pellet} on a $2.4 GHz$ \textit{Intel i5-6300U} CPU and $8 GB$ of RAM under regular computational load. 

Even though the results of the two methods are indistinguishable, it is important to note that if the sequence is modified, Algorithm \ref{algo:parse-tree-in-owl} need to be executed again, while a CFG-DL produced with Algorithm \ref{algo:cfg-in-owl} would be able to classify the new element without any additional effort.
We plan to address this aspect in future works, for instance by combining Algorithm \ref{algo:cfg-in-owl} and Algorithm \ref{algo:parse-tree-in-owl}.

\subsection{Subsequence classification} 
\label{sec:exp:further-classification}

A complete understanding of the CFG is required to interpret Figure \ref{fig:experiment-grammar-1-productions} and the results in Table \ref{tab:parsing-experiment-1}.
To obtain an higher interpretability, it is sufficient to expand the ontology produced by Algorithm \ref{algo:cfg-in-owl} and increase the level of abstraction or by aligning other relevant ontologies.
In the example of Figure \ref{fig:grammar-ontology-ext}, we can align the results with a domain-specific ontology, such as the Music Theory Ontology \cite{rashid2018music}. Differently from \cite{pachet2000solarblues}, the grammar and ontology definitions are decoupled in our approach.

\begin{figure}[ht]
    \centering
    \includegraphics[width=0.6\textwidth]{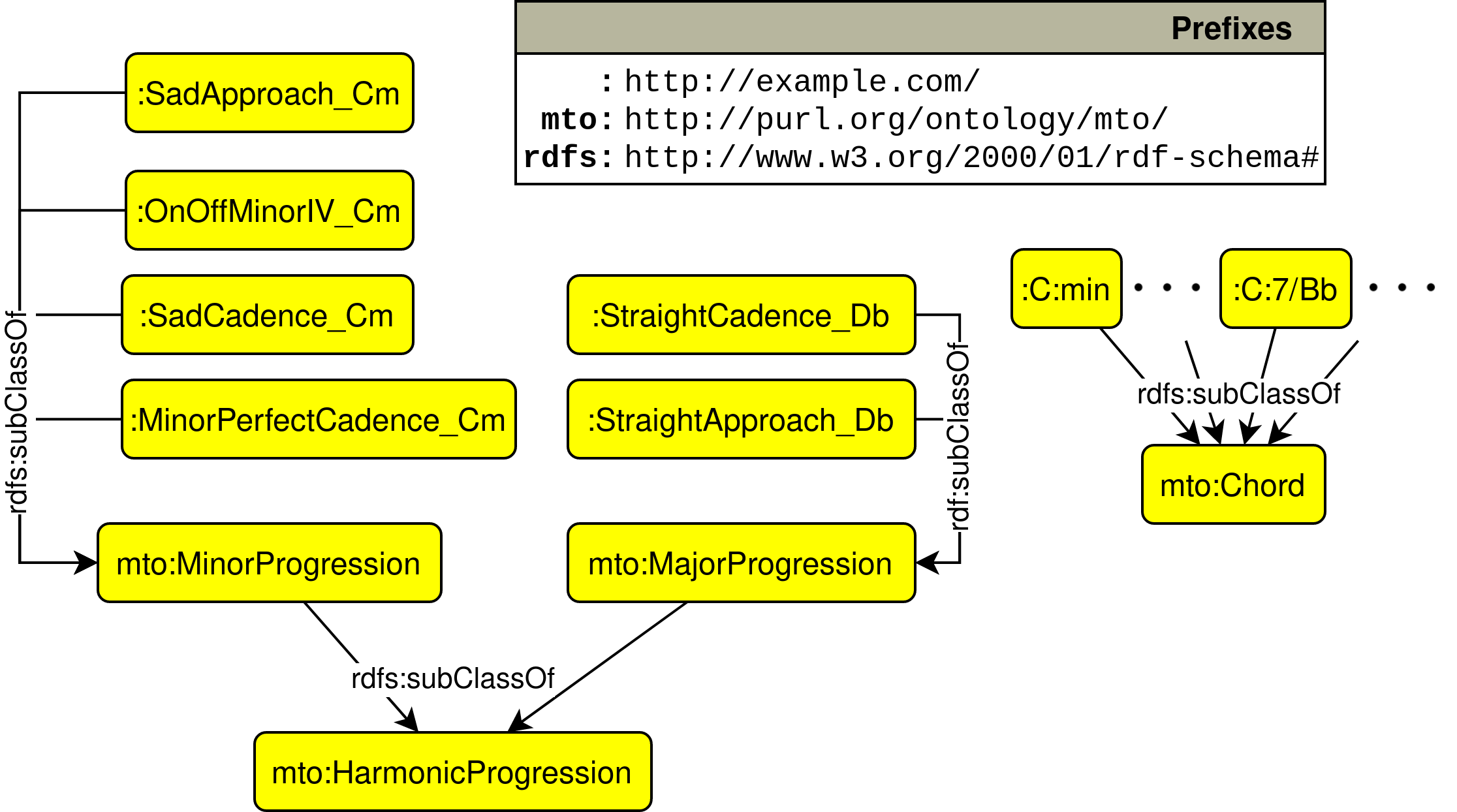}
    \caption{Ontology imported by the ontology generated by algorithm \ref{algo:cfg-in-owl}.}
    \label{fig:grammar-ontology-ext}
\end{figure}

The classification in Table \ref{tab:parsing-experiment-2} is obtained using the ontology of Figure \ref{fig:grammar-ontology-ext}. 
The results are arguably easier to interpret when compared to Table \ref{tab:parsing-experiment-1}, without any update on the original CFG.
By converting relevant grammars using Algorithm \ref{algo:cfg-in-owl}, a Knowledge Graph can be populated using the results of the parsing procedure.
Additional classification can then be performed by aligning additional ontologies.
For instance, to classify modal passages from a major to a minor progression (i.e. chord progression that transition from a major progression to a minor progression) as \textit{X} it would be sufficient to define an axiom such as

{\scriptsize
\begin{align*}
\blacktriangleright & \texttt{((mto:MajorProgression and } \overline{V}_1 \texttt{) or (mto:MinorProgression and } \overline{V}_2 \texttt{)) SubClassOf: X}
\end{align*}
}%

\begin{table}[ht]
    \centering
    \caption{Parsing results for the harmonic progression of \textit{Blue Bossa} using the grammar of Figure \ref{fig:experiment-grammar-1-productions} and importing Music Theory Ontology as shown in Figure \ref{fig:grammar-ontology-ext}.}
    \begin{tabular}{ c c } 
    \toprule
        Chord & Progression type \\ \midrule
        C:min7 & Minor \\
        F:min7 & Minor \\ 
        D:hdim7 & Minor \\ 
        G:7 & Minor \\ 
        C:minmaj7 & Minor \\ 
        Eb:min7 & Major \\ 
        Ab:7 & Major \\ 
        Db:maj & Major \\ 
        D:hdim7 & Minor \\ 
        G:7 & Minor \\ 
        C:minmaj7 & Minor \\ 
    \bottomrule
    \end{tabular}
    \label{tab:parsing-experiment-2}
\end{table}

\section{Conclusions} \label{sec:conclusions}
We present a novel approach to model a Context Free Grammar in Chomsky Normal Form using Description Logic.
The computational complexity, as analysed in Sections \ref{sec:method:complexity} and \ref{sec:exp:reasoning-complexity}, is too high to favour the usage of OWL for parsing sequences.
However, as shown  in Section \ref{sec:exp:further-classification}, it enables the alignment of approaches based on Context Free Grammars with technologies typically used in the Semantic Web. 
Sequences can be represented and classified using OWL in an effective way by combining it with traditional parsing algorithms. This form of classification can be used for tasks such as the computation of similarity between two sequences. The inference of these similarities is of great use in the Music Information Retrieval field, where it is hard to define a similarity metric between two harmonic progression.
The same approach, however, can be applied to other fields where sequences have been modeled using formal grammar, such as natural language processing \cite{lawrence2000nlp}, bio-informatics \cite{damasevicius2010dna} and programming languages \cite{aguair2019ooco}.
The hybrid approach using CFG and OWL ontologies allows a shift in the grammar modeling process: existing extensions, such as Combinatory Categorical Grammars (CCG) \cite{steedman2011ccg}, have been proposed to transparently take into account the semantics of a sequence, along-side the syntactical aspects. It is possible to formalize a CCG in terms of DL, with a similar approach as the one presented in Section \ref{sec:method}, and develop grammars whose semantic information is fueled by an expressive ontology.

\subsubsection*{Acknowledgements}
This project has received funding from the European Union’s Horizon 2020 research and innovation programme under grant agreement No 101004746.

\bibliographystyle{abbrv}
\bibliography{references}

\begin{thebibliography}{10}

\bibitem{aho1974lr}
A.~V. Aho and S.~C. Johnson.
\newblock {LR} parsing.
\newblock {\em ACM Computing Surveys (CSUR)}, 6(2):99--124, 1974.

\bibitem{akbik2018pos}
A.~Akbik, D.~Blythe, and R.~Vollgraf.
\newblock Contextual string embeddings for sequence labeling.
\newblock In {\em Proceedings of the 27th international conference on
  computational linguistics}, pages 1638--1649, 2018.

\bibitem{baroni1984melody}
M.~Baroni and C.~Jacoboni.
\newblock {\em Proposal for a grammar of melody : the {B}ach chorales}.
\newblock Presses de l'Universite de Montreal Montreal, 1978.

\bibitem{bohnet2018pos}
B.~Bohnet, R.~T. McDonald, G.~Sim{\~{o}}es, D.~Andor, E.~Pitler, and J.~Maynez.
\newblock {Morphosyntactic Tagging with a Meta-BiLSTM Model over Context
  Sensitive Token Encodings}.
\newblock In I.~Gurevych and Y.~Miyao, editors, {\em Proceedings of the 56th
  Annual Meeting of the Association for Computational Linguistics, {ACL} 2018,
  Melbourne, Australia, July 15-20, 2018, Volume 1: Long Papers}, pages
  2642--2652. Association for Computational Linguistics, 2018.

\bibitem{carrasco2019dl}
R.~Carrasco-Davis, G.~Cabrera-Vives, F.~Förster, P.~A. Est{\'{e} }vez,
  P.~Huijse, P.~Protopapas, I.~Reyes, J.~Mart{\'{\i}}nez-Palomera, and
  C.~Donoso.
\newblock {Deep Learning for Image Sequence Classification of Astronomical
  Events}.
\newblock {\em Publications of the Astronomical Society of the Pacific},
  131(1004):108006, sep 2019.

\bibitem{chen2016sentiment}
T.~Chen, R.~Xu, Y.~He, Y.~Xia, and X.~Wang.
\newblock {Learning User and Product Distributed Representations Using a
  Sequence Model for Sentiment Analysis}.
\newblock {\em IEEE Computational Intelligence Magazine}, 11(3):34--44, 2016.

\bibitem{chomsky1956cfg}
N.~Chomsky.
\newblock Three models for the description of language.
\newblock {\em IRE Transactions on Information Theory}, 2(3):113--124, 1956.

\bibitem{chomsky1959cfgproperties}
N.~Chomsky.
\newblock On certain formal properties of grammars.
\newblock {\em Information and Control}, 2(2):137--167, 1959.

\bibitem{damasevicius2010dna}
R.~Damasevicius.
\newblock Structural analysis of regulatory {DNA} sequences using grammar
  inference and support vector machine.
\newblock {\em Neurocomputing}, 73(4-6):633--638, 2010.

\bibitem{aguair2019ooco}
C.~Z. de~Aguiar, R.~de~Almeida~Falbo, and V.~E.~S. Souza.
\newblock {OOC-O:} {A} reference ontology on object-oriented code.
\newblock In A.~H.~F. Laender, B.~Pernici, E.~Lim, and J.~P.~M. de~Oliveira,
  editors, {\em Conceptual Modeling - 38th International Conference, {ER} 2019,
  Salvador, Brazil, November 4-7, 2019, Proceedings}, volume 11788 of {\em
  Lecture Notes in Computer Science}, pages 13--27. Springer, 2019.

\bibitem{dehaas2013harmtrace}
W.~B. De~Haas, J.~P. Magalh{\~a}es, F.~Wiering, and R.~C.~Veltkamp.
\newblock Automatic functional harmonic analysis.
\newblock {\em Computer Music Journal}, 37(4):37--53, 2013.

\bibitem{drummond2006order}
N.~Drummond, A.~L. Rector, R.~Stevens, G.~Moulton, M.~Horridge, H.~Wang, and
  J.~Seidenberg.
\newblock Putting {OWL} in order: Patterns for sequences in {OWL}.
\newblock In B.~C. Grau, P.~Hitzler, C.~Shankey, and E.~Wallace, editors, {\em
  Proceedings of the OWLED*06 Workshop on {OWL:} Experiences and Directions,
  Athens, Georgia, USA, November 10-11, 2006}, volume 216 of {\em {CEUR}
  Workshop Proceedings}. CEUR-WS.org, 2006.

\bibitem{earley1970}
J.~Earley.
\newblock An efficient context-free parsing algorithm.
\newblock {\em Communications of the ACM}, 13(2):94--102, 1970.

\bibitem{barton1987nlpcompl}
R.~C.~B. G.~Edward~Barton and E.~S. Ristad.
\newblock Computational complexity and natural language.
\newblock {\em Journal of Linguistics}, 24(2):573–575, 1987.

\bibitem{giegerich2014rna}
R.~Giegerich.
\newblock Introduction to stochastic context free grammars.
\newblock {\em RNA Sequence, Structure, and Function: Computational and
  Bioinformatic Methods}, pages 85--106, 2014.

\bibitem{glimm2014hermit}
B.~Glimm, I.~Horrocks, B.~Motik, G.~Stoilos, and Z.~Wang.
\newblock {HermiT: an OWL 2 reasoner}.
\newblock {\em Journal of Automated Reasoning}, 53(3):245--269, 2014.

\bibitem{harasim2018generalized}
D.~Harasim, M.~Rohrmeier, and T.~J. O'Donnell.
\newblock {A Generalized Parsing Framework for Generative Models of Harmonic
  Syntax}.
\newblock In {\em ISMIR}, pages 152--159, 2018.

\bibitem{harte2005symbolic}
C.~Harte, M.~B. Sandler, S.~A. Abdallah, and E.~G{\'o}mez.
\newblock {Symbolic Representation of Musical Chords: A Proposed Syntax for
  Text Annotations}.
\newblock In {\em ISMIR}, volume~5, pages 66--71, 2005.

\bibitem{hitzler2016odp}
P.~Hitzler, A.~Gangemi, K.~Janowicz, A.~Krisnadhi, and V.~Presutti, editors.
\newblock {\em {Ontology Engineering with Ontology Design Patterns -
  Foundations and Applications}}, volume~25 of {\em Studies on the Semantic
  Web}.
\newblock {IOS} Press, 2016.

\bibitem{hopcroft2001automata}
J.~E. Hopcroft, R.~Motwani, and J.~D. Ullman.
\newblock Introduction to automata theory, languages, and computation.
\newblock {\em Acm Sigact News}, 32(1):60--65, 2001.

\bibitem{horridge2006manchester}
M.~Horridge, N.~Drummond, J.~Goodwin, A.~L. Rector, R.~Stevens, and H.~Wang.
\newblock {The Manchester OWL syntax}.
\newblock In {\em OWLed}, volume 216, 2006.

\bibitem{horrocks2012owl}
I.~Horrocks, B.~Parsia, and U.~Sattler.
\newblock {OWL} 2 web ontology language direct semantics.
\newblock {\em World Wide Web Consortium}, pages 42--65, 2012.

\bibitem{horrocks2003complexity}
I.~Horrocks and P.~F. Patel-Schneider.
\newblock Reducing {OWL} entailment to description logic satisfiability.
\newblock In {\em International semantic web conference}, pages 17--29.
  Springer, 2003.

\bibitem{keller2013improviser}
R.~Keller, A.~Schofield, A.~Toman-Yih, Z.~Merritt, and J.~Elliott.
\newblock Automating the explanation of jazz chord progressions using idiomatic
  analysis.
\newblock {\em Computer Music Journal}, 37(4):54--69, 2013.

\bibitem{krisnadhi2011owlandrules}
A.~Krisnadhi, F.~Maier, and P.~Hitzler.
\newblock {OWL} and rules.
\newblock In {\em Reasoning Web International Summer School}, pages 382--415.
  Springer, 2011.

\bibitem{sanjeev2021gene}
S.~Kumar.
\newblock {Gene Sequence Classification Using K-mer Decomposition and
  Soft-Computing-Based Approach}.
\newblock In T.~K. Sharma, C.~W. Ahn, O.~P. Verma, and B.~K. Panigrahi,
  editors, {\em Soft Computing: Theories and Applications}, pages 181--186,
  Singapore, 2021. Springer Singapore.

\bibitem{lawrence2000nlp}
S.~Lawrence, C.~Giles, and S.~Fong.
\newblock Natural language grammatical inference with recurrent neural
  networks.
\newblock {\em IEEE Transactions on Knowledge and Data Engineering},
  12(1):126--140, 2000.

\bibitem{lei2019nlp}
J.~Lei, Q.~Zhang, J.~Wang, and H.~Luo.
\newblock {BERT Based Hierarchical Sequence Classification for Context-Aware
  Microblog Sentiment Analysis}.
\newblock In T.~Gedeon, K.~W. Wong, and M.~Lee, editors, {\em Neural
  Information Processing}, pages 376--386, Cham, 2019. Springer International
  Publishing.

\bibitem{lerdahl1983gttm}
F.~Lerdahl and R.~Jackendoff.
\newblock {\em A generative theory of tonal music}.
\newblock The MIT Press, Cambridge. MA, 1983.

\bibitem{lobosco2017dl}
G.~Lo~Bosco and M.~A. Di~Gangi.
\newblock {Deep Learning Architectures for DNA Sequence Classification}.
\newblock In A.~Petrosino, V.~Loia, and W.~Pedrycz, editors, {\em Fuzzy Logic
  and Soft Computing Applications}, pages 162--171, Cham, 2017. Springer
  International Publishing.

\bibitem{madhusudhan2019indianmusic}
S.~Madhusudhan and G.~Chowdhary.
\newblock {Deepsrgm - Sequence classification and ranking in Indian classical
  music with deep learning}.
\newblock In A.~Flexer, G.~Peeters, J.~Urbano, and A.~Volk, editors, {\em
  Proceedings of the 20th International Society for Music Information Retrieval
  Conference, ISMIR 2019}, Proceedings of the 20th International Society for
  Music Information Retrieval Conference, ISMIR 2019, pages 533--540.
  International Society for Music Information Retrieval, 2019.

\bibitem{mrini2019dparse}
K.~Mrini, F.~Dernoncourt, Q.~H. Tran, T.~Bui, W.~Chang, and N.~Nakashole.
\newblock Rethinking self-attention: Towards interpretability in neural
  parsing.
\newblock In T.~Cohn, Y.~He, and Y.~Liu, editors, {\em Findings of the
  Association for Computational Linguistics: {EMNLP} 2020, Online Event, 16-20
  November 2020}, volume {EMNLP} 2020 of {\em Findings of {ACL}}, pages
  731--742. Association for Computational Linguistics, 2020.

\bibitem{pachet2000solarblues}
F.~Pachet.
\newblock Computer analysis of jazz chord sequence: is solar a blues?, 2000.

\bibitem{rashid2018music}
S.~M. Rashid, D.~De~Roure, and D.~L. McGuinness.
\newblock A music theory ontology.
\newblock In {\em Proceedings of the 1st International Workshop on Semantic
  Applications for Audio and Music}, pages 6--14, 2018.

\bibitem{rivas2000rna}
E.~Rivas and S.~R. Eddy.
\newblock The language of {RNA:} a formal grammar that includes pseudoknots.
\newblock {\em Bioinform.}, 16(4):334--340, 2000.

\bibitem{rohrmeier2011cfg}
M.~Rohrmeier.
\newblock Towards a generative syntax of tonal harmony.
\newblock {\em Journal of Mathematics and Music}, 5(1):35--53, 2011.

\bibitem{sirin2007pellet}
E.~Sirin, B.~Parsia, B.~C. Grau, A.~Kalyanpur, and Y.~Katz.
\newblock Pellet: {A} practical owl-dl reasoner.
\newblock {\em Journal of Web Semantics}, 5(2):51--53, 2007.

\bibitem{steedman2011ccg}
M.~Steedman and J.~Baldridge.
\newblock Combinatory categorial grammar.
\newblock {\em Non-Transformational Syntax: Formal and Explicit Models of
  Grammar. Wiley-Blackwell}, pages 181--224, 2011.

\bibitem{steedman1984jazz}
M.~J. Steedman.
\newblock {A Generative Grammar for Jazz Chord Sequences}.
\newblock {\em Music Perception: An Interdisciplinary Journal}, 2(1):52--77,
  1984.

\bibitem{tian2020dparse}
Y.~Tian, Y.~Song, F.~Xia, and T.~Zhang.
\newblock {Improving Constituency Parsing with Span Attention}.
\newblock In T.~Cohn, Y.~He, and Y.~Liu, editors, {\em Findings of the
  Association for Computational Linguistics: {EMNLP} 2020, Online Event, 16-20
  November 2020}, volume {EMNLP} 2020 of {\em Findings of {ACL}}, pages
  1691--1703. Association for Computational Linguistics, 2020.

\bibitem{tidhar2005grammar}
D.~Tidhar.
\newblock {\em A hierarchical and deterministic approach to music grammars and
  its application to unmeasured preludes}.
\newblock PhD thesis, Berlin Institute of Technology, 2005.

\bibitem{tojo2006hpsg}
S.~Tojo, Y.~Oka, and M.~Nishida.
\newblock Analysis of chord progression by {HPSG}.
\newblock In {\em Proceedings of the 24th IASTED International Conference on
  Artificial Intelligence and Applications}, AIA'06, page 305–310, USA, 2006.
  ACTA Press.

\bibitem{valiant1975cfgcompl}
L.~G. Valiant.
\newblock General context-free recognition in less than cubic time.
\newblock {\em Journal of computer and system sciences}, 10(2):308--315, 1975.

\bibitem{weininger1988smiles}
D.~Weininger.
\newblock {SMILES}, a chemical language and information system. 1. introduction
  to methodology and encoding rules.
\newblock {\em Journal of chemical information and computer sciences},
  28(1):31--36, 1988.

\bibitem{wissmann2012chord}
J.~Wissmann.
\newblock {\em {Chord Sequence patterns in OWL}}.
\newblock PhD thesis, City University London, 2012.

\bibitem{wolstencroft2007genomic}
K.~Wolstencroft, R.~Stevens, and V.~Haarslev.
\newblock {Applying OWL Reasoning to Genomic Data}.
\newblock In C.~J.~O. Baker and K.-H. Cheung, editors, {\em Semantic Web:
  Revolutionizing Knowledge Discovery in the Life Sciences}, pages 225--248,
  Boston, MA, 2007. Springer US.

\bibitem{xing2010survey}
Z.~Xing, J.~Pei, and E.~Keogh.
\newblock A brief survey on sequence classification.
\newblock {\em SIGKDD Explor. Newsl.}, 12(1):40--48, Nov. 2010.

\bibitem{yang2020dparse}
K.~Yang and J.~Deng.
\newblock Strongly incremental constituency parsing with graph neural networks.
\newblock {\em Advances in Neural Information Processing Systems},
  33:21687--21698, 2020.

\bibitem{young-lai2009inference}
M.~Young-Lai.
\newblock Grammar inference.
\newblock In L.~LIU and M.~T. {\"O}ZSU, editors, {\em Encyclopedia of Database
  Systems}, pages 1256--1260, Boston, MA, 2009. Springer US.

\end{thebibliography}

\end{document}